\title[An observer principle for general relativity]{An observer principle for general relativity}
\author[Maurice J. Dupr\'e]{M\lowercase{aurice} J. D\lowercase{upr\'e}\\D\lowercase{epartment of} M\lowercase{athematics}
\\N\lowercase{ew} O\lowercase{rleans}, LA 70118\\\lowercase{email:  mdupre@tulane.edu}\\15 J\lowercase{anuary} 2014}
\address{DEPARTMENT OF MATHEMATICS\\TULANE UNIVERSTIY\\NEW ORLEANS, LA 70118}
\email{mdupre@tulane.edu}
\theoremstyle{plain}
\newtheorem{proposition}{Proposition}[section]
\newtheorem{theorem}{Theorem}[section]
\newtheorem{corollary}{Corollary}[section]
\newtheorem{postulate}{Postulate}[section]
\theoremstyle{definition}
\numberwithin{equation}{section}
\newcommand{\B}{\mathcal B}
\newcommand{\Ce}{\mathcal C}
\newcommand{\D}{\mathcal D}
\newcommand{\F}{\mathcal F}
\newcommand{\J}{\mathcal J}
\newcommand{\K}{\mathcal K}
\newcommand{\M}{\mathcal M}
\newcommand{\R}{\mathcal R}
\newcommand{\T}{\mathcal T}
\newcommand{\U}{\mathcal U}
\newcommand{\W}{\mathcal W}
\newcommand{\bC}{\mathbb{C}}
\newcommand{\bK}{\mathbb{K}}
\newcommand{\bN}{\mathbb{N}}
\newcommand{\bR}{\mathbb{R}}
\newcommand{\ra}{\rightarrow}
\newcommand{\lra}{\longrightarrow}
\newcommand{\del}{\partial}
\newcommand{\med}{\medbreak}
\begin{document}

\maketitle

\begin{abstract}

We give a mathematical uniqueness theorem which in particular shows that symmetric tensors  in general relativity are uniquely determined by their monomial functions on the light cone.  Thus, for an observer to observe a tensor at an event in general relativity is to contract with the velocity vector of the observer, repeatedly to the rank of the tensor.  Thus two symmetric tensors observed to be equal by all observers at a specific event are necessarily equal at that event.

\end{abstract}

\med \textbf{Mathematics Subject Classification (2000)} : 83C05,
83C40, 83C99.

\med \textbf{Keywords} : Gravity, general relativity, Einstein
equation, energy density.

\section{\bf INTRODUCTION}

\med

The mathematical formulation of observation in general relativity almost always involves the contraction of the observers (four) velocity vector with a tensor which captures mathematically what the observer wishes to observe.  Thus, at a specific event, if $T$ is the energy momentum  stress tensor of all matter and fields, then $T(u,u)=T_{\alpha \beta}u^{\alpha}u^{\beta}$ is the energy density observed by an observer with velocity $u$ at that specific event.  Suppose both $S$ and $T$ are candidates to be the energy momentum stress tensor of matter and fields, at a specific event.  Of course that means they are, in particular, symmetric tensors of the second rank. Then the principles of relativity should guarantee that if $S(u,u)=T(u,u)$ for every velocity vector in the forward light cone at that specific event, then $S=T.$  In fact this is the case as observed by Sachs and Wu \cite{SACHSWU}, and the result follows fairly simply from the symmetry of the two tensors.

We might say that in general, the laws of physics in general relativity are given by tensor equations of the form $S=T,$ and the justification of the equality of two rank $r$  symmetric tensors in general relativity would by the principle of relativity be that all observers observe them to be the same, which is to say that for every observer velocity vector $u$ in the light cone at the event, we have

\begin{equation}\label{monomial1}
S(u,u,u,\cdot \cdot \cdot ,u)=T(u,u, \cdot \cdot \cdot ,u).
\end{equation}

We will see that mathematically, if (\ref{monomial1}) holds for every velocity vector in the light cone, at a specific event, then in fact we can conclude $S=T$ at that specific event, provided that $S$ and $T$ are symmetric.   More generally, we need to consider equations formed by integrating contractions of rank $r$ symmetric tensor fields with $r$ velocity fields, and as the space of vector fields on a manifold is a vector space, we generally obtain real valued functions which are linear in each input vector field.  If covariant derivatives are involved, the resulting functions may not be linear over the smooth functions as coefficients, but merely linear over the constants.  In the most general setting, we would then have simply a vector space $V$ and a multilinear function $T$ of $r$ variables 

$$T: V^r \lra  \bR,$$
where
$$V^r=V \times V \times V \times \cdot \cdot \cdot \times V$$
is the $r-$fold cartesian product of the vector space $V$ with itself.   If $S$ and $T$ are both symmetric multilinear functions of rank $r$ on $V$ and if (\ref{monomial1}) holds for all $u$ in the subset $U$ of $V$, we would like to be able to conclude that $S=T.$  Obviously, there must be some restriction on what $U$ can be for this to work, and we would like to find a general condition which works in any vector space, without having to deal with topology.

%%%%%%%%%%%%%%%%%%%%%%%%%%%%%%%%%%%%%%%%%%%%%%%%%%%%%%%%%%%%%%%%%%%%%%%%%%%%%%%%%%%%%%%%%%%%%%%%%%%%%%%%%%%%%
%%%%%%%%%%%%%%%%%%%%%%%%%%%%%%%%%%%%%%%%%%%%%%%%%%%%%%%%%%%%%%%%%%%%%%%%%%%%%%%%%%%%%%%%%%%%%%%%%%%%%%%%%%%%%%%
%%%%%%%%%%%%%%%%%%%%%%%%%%%%%%%%%%%%%%%%%%%%%%%%%%%%%%%%%%%%%%%%%%%%%%%%%%%%%%%%%%%%%%%%%%%%%%%%%%%%%%%%%%
%%%%%%%%%%%%%%%%%%%%%%%%%%%%%%%%%%%%%%%%%%%%%%%%%%%%%%%%%%%%%%%%%%%%%%%%%%%%%%%%%%%%%%%%%%%%%%%%%%%%

\section{\bf THE OBSERVER PRINCIPLE}

\med

If the smooth manifold $M$ is a model for space-time of dimension $n+1$ in general relativity, then for fixed $m \in M,$ mathematically, $T_mM$ is a Lorentz vector space of dimension
$n+1,$ so taking any time-like unit vector, say $u,$ and
defining $g_u(v,w)=2g(u,v)g(u,w)+g(v,w)$ gives a Euclidean metric
on $T_mM$ making it in particular into a Banach space of finite
dimension. Thus, $T_mM$ is an example of a Banachable space-a
topological vector space whose topology can be defined by a norm.
This topology is actually well-known to be independent of the
choice of $u$ in case of finite dimensions. In fact, any finite dimensional vector space has a unique topology making it a topological vector space \cite{JLKELLEY2}.  Differential geometry
can be easily based on such spaces, \cite{UPMEIER}, and for some examples in
infinite dimension, the interested reader can see
\cite{KRIGL&MICHOR}, \cite{BELTITA}, \cite{DUPGLAZEPREV}, \cite{DUPREGLAZE1} and
\cite{DUPREGLAZE2}. In particular, the theory of analytic
functions and power series all goes through for general Banachable
spaces \cite{UPMEIER}.  We would like to point out how this can be applied to the
theory of Lorentz vector spaces and vector spaces of vector fields on spacetimes.

In general, suppose that $E_1,E_2,\cdot \cdot \cdot , E_r,$ and $F$ are all vector spaces
(possibly infinite dimensional and not necesssarily topological). Recall the function or mapping
$$A: E_1 \times E_2 \times \cdot \cdot \cdot \times E_r \lra F$$
is a {\it multilinear
map} provided that it is linear in each variable when all others
are held fixed, and in this case, we say that $A$ is a multilinear map
of rank $r$ on $E_1 \times E_2 \times...\times E_r$ with values in $F.$  A useful notation here is just to use juxtaposition
for evaluation of multilinear maps, so we write
$$A(v_1,v_2, \cdot \cdot \cdot ,v_r)=Av_1v_2 \cdot \cdot \cdot v_r$$ 
whenever $v_k \in E_k$ for $1
\leq k \leq r.$  Thus, we simply treat the multilinear map $A$ as
a sort of generalized coefficient which allows us to multiply
vectors, and the multilinear condition simply becomes the
distributive law of multiplication.

In case that $E_k=E$ for all $k,$ there is really a single vector
space providing the input vectors, and $A:E^r \lra F.$ We say that
$A$ is a multilinear map of rank $r$ on $E$ in this case, even
though in reality, the domain of $A$ is the set $E^r.$ Here it is
useful to write $v^{(k)}$ for the $k-$fold juxtaposition of $v$'s.
Thus we have
$$A(v,v,...,v)=Av^{(r)}.$$ 
More generally, then for any positive
integer $m$ and vectors $v_1,v_2,...,v_m \in E$ and non-negative
integers $k_1,k_2,...,k_m$ satisfying $k_1+k_2+...+k_m=r,$ we have
the equation
$$Av_1^{(k_1)}v_2^{(k_2)} \cdot \cdot \cdot v_m^{(k_m)}=A(v_1,\cdot \cdot \cdot ,v_1,v_2,\cdot \cdot \cdot,v_2,\cdot \cdot \cdot,v_m,\cdot \cdot \cdot ,v_m)$$ 
where
each vector is repeated the appropriate number of times, $v_1$
being repeated $k_1$ times, $v_2$ repeated $k_2$ times and so on.
If $k_i=0$ then that merely means that $v_i$ is
actually left out, so $v^{(0)}=1$ in effect.

Of course, we say that $A:E^r \lra F$ is symmetric if $Av_1v_2...v_r$ is
independent of the ordering of the $r$ input vectors.  Thus when
dealing with algebraic expressions involving symmetric multilinear
maps as coefficients, the commutative law is in effect.  We can define
the {\it monomial} function $f_A:E \lra F$ by the rule
$f_A(x)=A(x,x,x,\cdot \cdot \cdot ,x)=Ax^{(r)}.$

We denote by $L(E_1,E_2,\cdot \cdot \cdot ,E_r;F)$ the vector space of all multilinear maps of $E_1 \times E_2 \times \cdot \cdot \cdot \times E_r$ into $F,$ and set $L^r(E;F)=L(E_1,E_2,\cdot \cdot \cdot ,E_r;F)$ when all $E_k$ are the same vector space $E.$  Of course, $L^1(E;F)=L(E;F)$ is just the vector space of all linear maps from $E$ to $F.$  We denote the dual space of $E$ by $E^*=L(E;\bR).$  We use $L^r_{sym}(E;F)$ to denote the vector subspace of $L^r(E;F)$ consisting of the symmetric multilinear maps.  There is a natural isomorphism
$$L(E_1,E_2, \cdot \cdot \cdot ,E_r;F)\cong L(E_1, \cdot \cdot \cdot ,E_{r-1};L(E_r;F))$$ 
which identifies the $F-$valued rank $r$ multilinear map $A$ with the $L(E_r;F)-$valued multilinear map $B$ of rank $r-1$ given by
$$[Bv_1v_2 \cdot \cdot \cdot v_{r-1}](v_r)=Av_1v_2 \cdot \cdot \cdot v_r,$$  
and notice that if $A$ is symmetric then so is $B,$ but of course the converse may not be true.  In any case, it is useful to simply denote $Bv_1v_2 \cdot \cdot \cdot v_{r-1}=Av_1v_2\cdot \cdot \cdot v_{r-1}$ in this situation.

Suppose now that $E$ and $F$ are
any vector spaces and $A$ is a symmetric multilinear
map (tensor) on $E$ with values in $F,$ of rank $r.$  We will begin for simplicity by restricting to Banachable spaces, that is topolgical vector spaces whose topology is complete and comes from a norm.  In addition for simplicity, we assume that $A$ is continuous.  For in case $E$ and $F$ are Banachable spaces and $A$ is continuous, $f_A$ is an analytic
function. In fact, if $x_1,x_2,x_3, \cdot \cdot \cdot ,x_r \in E,$ then
differentiating, using proposition 3.3 and repeated application of
propositions 3.5 and 3.8 of \cite{LANG}, page 10, we find

\begin{equation}\label{analyticcontinuation}
D_{x_1}D_{x_2}D_{x_3}  \cdot \cdot \cdot D_{x_r}f_A(a)=(n!)A(x_1,x_2,x_3, \cdot \cdot \cdot  ,x_r),~~a
\in E.
\end{equation}
From (\ref{analyticcontinuation}), we see very generally that if
$U$ is any open subset of $E$ on which $f_A$ is constant, then in
fact, $A=0,$ since we can choose $a \in U.$ Indeed, if $a \in U,$
since $f_A$ is constant on $U,$ it follows that the derivative on
the left side of the equation (\ref{analyticcontinuation}) is 0,
and hence the right side is 0, for every possible choice of
vectors $x_1,x_2,x_3,  \cdot \cdot \cdot , x_r \in E.$ But notice that $a$ does not
appear on the right hand side of (\ref{analyticcontinuation}),
only $A(x_1,x_2,x_3,  \cdot \cdot \cdot , x_r),$ and the vectors $x_1,x_2,x_3,  \cdot \cdot \cdot ,  x_r$
can be chosen arbitrarily. Thus, $A=0$ follows. We have therefore
proven a special case of the following mathematical theorem, for which the proof in general will be given after some remarks.

\begin{theorem}\label{gen top anal cont}
Suppose $E$ is any topological vector space and $F$ is any vector
space. Suppose $A:E^r \lra F$ and $B:E^r \lra F$ are any symmetric multilinear
maps of rank $r.$ If there is a non-empty open subset of $E$ on
which $f_A-f_B:E \lra F$ is constant, then $A=B.$
\end{theorem}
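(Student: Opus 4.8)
The plan is to strip the statement down to a one-variable polynomial computation that uses the topology of $E$ only once. First I would put $C=A-B$, which is again a symmetric multilinear map of rank $r$ with $f_C=f_A-f_B$, so it suffices to show $C=0$ under the hypothesis that $f_C$ is constant on a non-empty open set $U\subseteq E$. I would then reduce to the scalar case $F=\bR$: for every linear functional $\lambda$ in the algebraic dual of $F$, the composite $\lambda\circ C$ is a symmetric rank $r$ multilinear map $E^r\to\bR$ whose monomial function $\lambda\circ f_C$ is constant on $U$, and since the algebraic dual of a vector space separates its points, establishing $\lambda\circ C=0$ for all $\lambda$ yields $C=0$. (Alternatively one can run the argument below directly with $F$-valued polynomials, invoking invertibility of a Vandermonde matrix over $\bR$, but the scalar reduction is cleaner.) So assume $F=\bR$.

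Next I would fix a point $a\in U$ and take $x\in E$ arbitrary. Since translation and scalar multiplication are continuous in a topological vector space, $t\mapsto a+tx$ is a continuous map $\bR\to E$, so its preimage of $U$ is open in $\bR$ and contains $0$; hence there is $\varepsilon>0$ with $a+tx\in U$ whenever $|t|<\varepsilon$. Expanding by multilinearity and collecting terms using symmetry gives the binomial-type identity
\begin{equation*}
f_C(a+tx)=C\big((a+tx)^{(r)}\big)=\sum_{k=0}^{r}\binom{r}{k}\,C\big(x^{(k)}a^{(r-k)}\big)\,t^{k},
\end{equation*}
which displays $t\mapsto f_C(a+tx)$ as a real polynomial of degree at most $r$. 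By hypothesis it is constant on the interval $(-\varepsilon,\varepsilon)$, hence it is the constant polynomial, so every coefficient with $k\geq 1$ vanishes. Reading off the coefficient of $t^r$ gives $C(x^{(r)})=f_C(x)=0$, and since $x$ was arbitrary, $f_C\equiv 0$ on all of $E$.

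Finally I would recover the tensor from its now-vanishing monomial function via polarization: over $\bR$ one has
\begin{equation*}
r!\,C(x_1,x_2,\dots,x_r)=\sum_{S\subseteq\{1,\dots,r\}}(-1)^{\,r-|S|}\,f_C\Big(\sum_{i\in S}x_i\Big)
\end{equation*}
for all $x_1,\dots,x_r\in E$ (with $f_C(0)=0$ for $S=\emptyset$), and the right-hand side is identically $0$; hence $C=0$, i.e.\ $A=B$. The only steps that need real care are the reduction to scalars — so that both the single-variable polynomial and the division by $r!$ make sense — together with the elementary fact that a polynomial which is constant on a non-degenerate interval is constant; the one essential use of the topology on $E$ is precisely in producing the interval $(-\varepsilon,\varepsilon)$ about $t=0$, which is also what makes the argument work well beyond the finite-dimensional case already treated above.
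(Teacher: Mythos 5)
Your proof is correct, but it takes a genuinely different route from the paper's. The paper obtains this theorem as a corollary of a more general statement (Theorem \ref{genanalcont}) proved by induction on the rank: after expanding $f_{A-B}$ along a line $v_0+tw$ through a point $v_0$ of the star of $U$, it extracts only the degree-one coefficient, concludes that the linear map $(A-B)v_0^{(r-1)}$ vanishes for every $v_0$ in $Star(U,E)$, and then applies the inductive hypothesis to the resulting rank $r-1$ map --- which is why it carries the non-empty $r$-star hypothesis. You instead read off the top coefficient $t^r$, which immediately globalizes the hypothesis ($f_{A-B}\equiv 0$ on all of $E$), and then recover the tensor by the polarization identity, with no induction; your argument uses only a single point at which $U$ is star-like, so it would in fact yield the paper's Theorem \ref{genanalcont} under the weaker assumption that $Star(U,E)$ (rather than $Star^r(U,E)$) is non-empty, and it extends to all ranks the polarization argument the paper cites only for $r=2$. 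One point of care: your reduction to $F=\bR$ rests on the algebraic dual of an arbitrary $F$ separating points, which is exactly the Axiom-of-Choice issue the paper flags (and circumvents with its notion of a non-degenerate $F$); since the theorem as stated allows any $F$ and the paper itself invokes this fact, this is not a gap, and your Vandermonde alternative with $F$-valued coefficients sidesteps it entirely. In exchange, the paper's inductive proof avoids the explicit polarization formula and is tailored to its star framework, while yours is shorter and non-inductive.
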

We emphasize that the vector spaces here may be infinite dimensional and the multilinear map need not be continuous.

Notice that Theorem \ref{gen top anal cont} is a well known special case of
the uniqueness of general power series (there is only one term
here). For a purely algebraic proof in the case $r=2,$ which is
the case of most importance here, we refer the interested reader
to \cite{DUPRE}.  See also page 72 of \cite{SACHSWU} or page 260 of \cite{KRIELE} for a proof using
differentiation for the special case $r=2$ which is similar in form to
that given here next.  As well, the result for $r=2$ can easily be proved directly using algebra alone by the technique of polarization as in \cite{DUPRE}.

\begin{corollary}\label{obsv2} {\bf OBSERVER PRINCIPLE.} If $A$
and $B$ are both symmetric tensors of rank $r$ on $T_mM$ with
values in $F,$ and if $Au^{(r)}=Bu^{(r)}$ for every time-like unit
vector in $T_mM,$ then $A=B.$
\end{corollary}

\begin{proof} Since $f_A$ and $f_B$ are homogeneous functions of
degree $r,$ it follows that the hypothesis guarantees
$Av^{(r)}=Bv^{(r)}$ for all $v$ in the light cone of $T_mM$ which
is an open subset of $T_mM.$
\end{proof}

If we define $U(T_mM)$ to be the set of time-like unit
vectors in $T_mM,$ then this set has a topology called the
relative topology as a subset of $T_mM$ and we have a retraction
function given by normalization which retracts the light cone onto
$U(T_mM).$ It follows immediately that if $W$ is any (relatively)
open subset of $U(T_mM),$ then the hypothesis of the observer
principle can be weakened to merely require $Au^{(r)}=Bu^{(r)}$
for each $u \in W.$ In particular, if we choose a time orientation
on $T_mM,$ then we can merely require $Au^{(r)}=Bu^{(r)}$ for each
future time-like unit vector in $T_mM.$ This is in a sense, the
essence of the {\bf Principle of Relativity}, for instance, as
applied to second rank symmetric tensors-a law (at $m$), say
$A=B,$ should be true for all observers (at $m$) and conversely,
if true for all observers (at $m$), that is if $A(u,u)=B(u,u)$ for
all (future) time-like unit vectors $u \in T_mM,$ then it should
be a law (at $m$) that $A=B.$ It is for this reason that we call
Corollary \ref{obsv2} the observer principle.

We wish to be able to apply the observer principle to multilinear maps defined on vector spaces of sections of tensor bundles given by integration of sections, so we need the complete generality of Theorem \ref{gen top anal cont} whose proof we turn to now.  In fact, because of this need, it will be useful to be even more general, but the statement of the theorem becomes slightly more technical.  To give the general statement, we need to define the {\it star} of a subset of a vector space.  If $U \subset E,$ its star, denoted $Star(U,E)$ is the set of all points $v \in U$ having the property that for each $w \in E$ there is some positive number  $\delta_w$ so that $v+tw \in U$ for any number $t$ with $|t| \leq \delta_w.$  If $E$ is a topological vector space, then each open subset is equal to its star.  Set $Star^1(U,E)=Star(U,E)$ and inductively, define $$Star^r(U,E)=Star^{r-1}(Star(U,E)),$$ and call this the $r-$star of $U$ in $E.$  For instance, if $Star(U,E)=U,$ then obviously $Star^r(U,E)=U,$ for every $r\geq 1.$

We will also need to use the fact that if $v \in F,$ then there is $\lambda \in F^*$ with $\lambda(v)\neq 0.$  This fact in general vector spaces requires the existence of a spanning linearly independent set which is guaranteed by the Axiom of Choice of set theory.  As such an axiom might be objectionable in applications to physics, we circumvent this by simply defining $F$ to be {\it non-degenerate} provided that for each vector $v$ in $F$ there is a member $\lambda$ of $F^*$ with $\lambda(v) \neq 0.$  In all applications to vector spaces of tensor fields in physics, the non-degeneracy is usually obvious.  However, the topology is usually not, so the notion of the star will circumvent the need to actually deal with topological vector spaces, beyond merely noting that each open subset of a topological vector space equals its star and therefore its $r-$star for all $r \geq 1.$

\begin{theorem}\label{genanalcont} Suppose $F$ is a non-degenerate vector space. If $A$ and $B$ are a symmetric multilinear maps of rank $r$ on a vector space $E$ with values in the vector space $F$ and if the monomial
function $f_A-f_B:E \lra F$ is constant on a set having non-empty $r-$star, then $A=B.$
\end{theorem}

Obviously Theorem \ref{gen top anal cont} is a consequence of Theorem \ref{genanalcont}, by our previous remarks. Clearly, as $f_{A-B}=f_A-f_B,$ it suffices to prove the case with $A$ arbitrary and $B=0.$  Thus we begin by assuming that $A$ is an arbitrary symmetric multilinear map of rank $r$ on the arbitrary vector space $E$ with values in the arbitrary non-degenerate vector space $F.$

Then for any $v_0,v_1, \cdot \cdot \cdot  v_m \in E,$

\begin{equation}\label{multnomthm}
f_A(v_0+v_1+\cdot \cdot \cdot+v_m)=\sum_{[k_0+k_1+...k_m=r]}C(r;k_0,k_1, \cdot \cdot \cdot ,k_m)Av_0^{(k_0)}v_1^{(k_1)} \cdot \cdot \cdot v_m^{(k_m)}.
\end{equation}
Here $C(r;k_0,k_1,  \cdot \cdot \cdot , k_m)$ is the multinomial coefficient:
\begin{equation}\label{multnomcoeff}
C(r;k_0,k_1, \cdot \cdot \cdot , k_m)=\frac{r!}{k_0!k_1!  \cdot \cdot \cdot  k_m!}.
\end{equation}

Now, proceeding inductively, let us notice that if $r=1,$ then the theorem is a triviality, since a linear map which is constant on any non-empty star is easily seen to be identically zero.  Assume the theorem is already proven for the case of rank $r-1.$  Suppose that $f_A$ is constant with value $C$ on $U \subset E$ with $Star^r(U,E)$ non-empty.  Choose $v_0 \in Star(U,E) \subset U,$ take any $w \in E,$ and any $\lambda \in F^*.$  Then choose $\delta >0$ such that $v_0+tw \in U,$ whenever $|t| \leq \delta.$  We have
$$C=f_A(v_0+tw)=\sum^r_{k=0}C(r;k,r-k)t^kAv_0^{(r-k)}w^{(k)}=f_A(v_0)+\sum^r_{k=1} C(r;k,r-k)t^kAv_0^{(r-k)}w^{(k)}.$$ 
Thus, since also $f_A(v_0)=C,$ we must in fact have
$$\sum^r_{k=1}C(r;k,r-k)t^kAv_0^{(r-k)}w^{(k)}=0,$$ 
for any number $t$ with $|t| \leq \delta.$

Applying $\lambda$ to the previous vanishing equation gives a real-valued polynomial function on $\bR$ of degree $r$ which vanishes for all $t$ with $|t| \leq \delta.$  Since such a polynomial function can have at most $r$ roots, this vanishing implies all coefficients are zero.  In particular, this means that $$\lambda(Av_0^{(r-1)}w)= 0.$$

Since $F$ is non-degenerate and $\lambda \in F^*$ was arbitrary, $Av_0^{(r-1)}w = 0$  must be the case.  That is, the linear map $Av_0^{(r-1)}=0.$  But, $v_0$ is an arbitrary point of $Star(U,E).$   This means that if we define the rank $r-1$ multilinear map $B$ by $[Bv_1v_2  \cdot \cdot \cdot v_{r-1}]w=Av_1v_2  \cdot \cdot \cdot  v_{r-1}w,$ then $f_B$ vanishes identically on $Star(U,E)$ which has non-empty $(r-1)-$star, and therefore by the inductive hypothesis, $B=0.$  But this obviously implies $A=0,$ and the proof is complete.  We take this opportunity to point out that the proof given in the appendix of \cite{DUPRE2} for this very general case is invalid, so the proof here provides a correction to that appendix, as well as a further generalization.

By convention, a multilinear map from $E$ to $F$ of rank zero is
just a vector in $F.$  If $A_k$ is a symmetric multilinear map of
$E$ to $F$ of rank $k,$ for $0 \leq k \leq r,$ then the function

$$f=\sum_{k=0}^r f_{A_k}$$
is a called a polynomial function of degree $r.$   If $U$ is a subset of $E$ having non-empty star and on which $f$ is constant, then we can for fixed $u$ in $U$ choose a linear function $g$ in $F^*$ with $g(f(u)) \neq 0$ and define $h: \bR \lra \bR$ by

$$h(t)=g(f(tu)),$$
and we see that $h$ is a real-valued polynomial function of a real variable which is constant on an infinite set and is therefore identically zero-that is, all its coefficients must be zero, and therefore $A_ku^{(k)}=0$ for each $k \leq r,$ so if $U$ has non-empty $r-$star, then $A_k=0$ for each $k \leq r.$

%The above proof can easily be modified to apply to the polynomial function of any finite degree, by using induction on the degree of the polynomial.  The preceding expansion argument just gives a %more complicated polynomial function, but as all coefficients must vanish,  applying it to the top degree term as above would show that the coefficient $A_r=0$ if we start with what we think is a %degree $r$ polynomial.  But this means that the polynomial is really only of degree $r-1,$ and therefore by downward induction, we see $f=0.$  If $F$ is also a
%topological vector space, then we can take limits in the sum and
%consider power series. 

The general principle of analytic
continuation relies on the uniqueness of power series expressions.
In general, for Banach spaces, if two power series agree locally
as functions, then all their coefficients are the same-that is,
they are the same power series. The proof is easy using
differentiation, just use the same method used in freshman
calculus, but for Banach space valued functions. We have basically
proven this fact in case there are only a finite number of terms in the power
series, but without any topology required for the vector spaces involved.

%and indicated how to prove it for a series with only a finite number of terms, but without using differentiation and without even having
%topology on the vector spaces.

\section{\bf VECTOR SPACES OF SMOOTH SECTIONS OF A SMOOTH VECTOR BUNDLE}

Our main application of the observer principle in infinite dimensions will be to vector spaces of smooth sections of $TM,$ so we would like to know that a symmetric multilinear map on the vector space of sections of $TM$ must vanish if its monomial form vanishes on all timelike vector fields.   More generally, we can consider any vector bundle $\xi$ over any manifold $M,$ and any semi-Riemannian metric on the vector bundle.  Thus, if we know that the set of timelike vector fields equals its star, then as it equals its $r-$star for all $r \geq 0,$ then the observer principle applies by our Theorem
\ref{genanalcont}.  The next proposition solves this problem for the case of vector spaces of vector fields which are continuous over a given fixed compact subset of $M.$

\begin{theorem}\label{timelikestar}
Suppose that $M$ is a smooth manifold and that $\xi$ is a smooth semi-Riemannian vector bundle over $M$ with metric tensor $g.$  Suppose that $K$ is a compact subset of $M$ and $E$ is a vector subspace of the set of all continuous sections of the vector bundle $\xi_M|K.$  Let $U$ be the set of all sections $v$ in $E$ satisfying $g(v,v) < 0$ on $K.$  Then $$Star(U,E)=U.$$
\end{theorem}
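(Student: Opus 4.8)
The plan is to establish the two inclusions separately, the first being immediate. By the very definition of the star, $Star(U,E)$ is a subset of $U$, so the inclusion $Star(U,E) \subseteq U$ requires no argument. The whole content is the reverse inclusion $U \subseteq Star(U,E)$, and this is where the hypothesis that $K$ is compact is used.

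To prove $U \subseteq Star(U,E)$, I would fix a section $v \in U$ and an arbitrary section $w \in E$. Since $E$ is a vector subspace of the continuous sections of $\xi_M|K$, we have $v + tw \in E$ for every real $t$, so it remains only to produce a number $\delta_w > 0$ such that $g(v+tw,\,v+tw) < 0$ at every point of $K$ whenever $|t| \leq \delta_w$. Bilinearity and symmetry of $g$ give, pointwise on $K$, the identity
$$g(v+tw,\,v+tw) = g(v,v) + 2t\,g(v,w) + t^2 g(w,w),$$
in which $g(v,v)$, $g(v,w)$ and $g(w,w)$ are continuous real-valued functions on $K$.

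Now the compactness of $K$ is invoked, and this is the only substantive step. Because $g(v,v)$ is continuous on the compact set $K$ and is strictly negative everywhere on $K$, it attains a maximum value $-c$ with $c > 0$; likewise $g(v,w)$ and $g(w,w)$, being continuous on the compact set $K$, are bounded, say by $M_1$ and $M_2$ respectively in absolute value. Hence for every $m \in K$ and every $t$ with $|t| \leq \delta$ one has $g(v+tw,\,v+tw)(m) \leq -c + 2\delta M_1 + \delta^2 M_2$. Since the right-hand side tends to $-c < 0$ as $\delta \to 0$, there is $\delta_w > 0$ with $2\delta_w M_1 + \delta_w^2 M_2 < c$; for this $\delta_w$ and any $t$ with $|t| \leq \delta_w$ the function $g(v+tw,v+tw)$ is strictly negative throughout $K$, so $v + tw \in U$. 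As $w \in E$ was arbitrary, $v \in Star(U,E)$, which completes the proof. I do not expect a genuine obstacle here: the argument is essentially the remark that a continuous negative function on a compact set is bounded away from zero, together with the boundedness of the cross and quadratic terms. It is worth noting that compactness is what makes this work — over a non-compact base $U$ need not equal its star, since the maximum of $g(v,v)$ could fail to be attained and could be $0$ in the limit, which is precisely why the statement is given over a compact $K$.
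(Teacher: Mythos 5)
Your proof is correct, but it proceeds differently from the paper's. You expand $g(v+tw,v+tw)=g(v,v)+2t\,g(v,w)+t^2g(w,w)$ and make a quantitative estimate: compactness of $K$ gives a uniform bound $g(v,v)\leq -c<0$ (extreme value theorem) together with uniform bounds $|g(v,w)|\leq M_1$, $|g(w,w)|\leq M_2$, and then an explicit $\delta_w$ is chosen so that the whole expression stays negative on $K$ for $|t|\leq\delta_w$. The paper instead argues ``softly'': it forms the continuous function $f_w(m,t)=g(v(m)+tw(m),v(m)+tw(m))$ on $K\times\bR$, observes that $W=f_w^{-1}((-\infty,0))$ is an open neighborhood of the compact rectangle $K\times\{0\}$, and applies Wallace's theorem (the tube-lemma type result the author highlights) to obtain an open set $V_w\ni 0$ in $\bR$ with $K\times V_w\subset W$, from which the required $\delta_w$ follows. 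The two arguments use compactness of $K$ in equivalent ways; yours is more elementary and self-contained (no appeal to Wallace's theorem) but leans on the specific quadratic dependence of $g(v+tw,v+tw)$ on $t$, while the paper's argument never expands the quadratic form and would apply verbatim to any set $U$ cut out by a pointwise open condition via a continuous function, which is presumably why the author prefers it. Your closing remark that compactness is essential is also consistent with the paper's setup, and your verification that $v+tw\in E$ via the subspace hypothesis is the same minor point the paper leaves implicit.
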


The proof of this theorem is a simple application of one of the most useful, simple, and beautiful theorems in point-set topology which is due to A. D. Wallace (my first mathematical mentor) \cite{JLKELLEY}.

\begin{theorem} ({\bf A. D.  Wallace.})  If $X$ and $Y$ are any topological spaces, if $A$ is a compact subset of $X$ and $B$ is a compact subset of $Y,$ and if $W$ is an open subset of $X\times Y$ which contains $A \times B,$ then there are open subsets $U$ of $X$ and $V$ of $Y,$ respectively, such that $A \subset U,~~B \subset Y,$ and $U \times V \subset W.$
\end{theorem}

It is customary to call a set of the form $A \times B$ a rectangle or to call it rectangular.  For a proof of Wallace's Theorem, we refer to \cite{JLKELLEY}, but it is an elementary exercise in topology.  Also, it is elementary in topology that $U \times V$ is open if and only if both $U$ and $V$ are open, whereas only slightly less elementary is the fact that $A \times B$ is compact if and only if both $A$ and $B$ are compact.  An open set which contains $A$ is said to be an open neighborhood of $A.$  Thus Wallace's Theorem says simply every open neighborhood of a compact rectangle contains a rectangular open neighborhood of that compact rectangle.

To use Wallace's Theorem here, given $v_0 \in U,$ and any $w \in E,$ we define the real valued function $f_w:K \times \bR \lra \bR$ by

$$f_w(m,t)=g(v_0(m)+tw(m),v_0(m)+tw(m)),~~(m,t) \in K \times \bR.$$  Since all vector fields in $E$ are assumed continuous, it follows that $f_w$ is continuous.  Also, clearly

$$f_w(K \times \{0\}) \subset N,$$ where $N$ denotes the set of all negative real numbers.  Thus, as $N$ is an open subset of $\bR$ and $f_w$ is continuous, it follows that its inverse image $W=f_w^{-1}(N)$ is an open subset of $K \times \bR$ and hence is an open neighborhood of the compact rectangle $K \times \{0\} \subset K \times \bR.$  Keeping in mind that $K$ is an open subset of itself, by Wallace's Theorem, there is an open rectangle $K \times V_w$ with $$K \times \{0\} \subset K \times V_w \subset W.$$  Thus, $V_w$ is an open neighborhood of zero in $\bR$ so there is a positive number $\delta_w$ with the property that if $|t| < \delta_w,$ then $t \in V_w.$  This means that $v_0 +tw \in U,$ for any $t \in V_w,$ and in particular, for any $t$ with $|t| < \delta_w.$  As $w$ was arbitrary in $E,$ this means, $v_0 \in Star(U,E).$  As $v_0$ was an arbitrary vector field in the set $U,$ it follows that $Star(U,E)=U$ as claimed.

Combining Theorem \ref{genanalcont} and Theorem \ref{timelikestar}, we then have immediately the final result on the observer principle.

\begin{theorem}\label{timelikeobserverprinciple}
If $K$ is a compact subset of $M$ and if $S$ and $T$ are symmetric multilinear maps of rank $r$ on a vector space $E$ of continuous vector fields on $K$ with the property that $Sv^r=Tv^r$ for every timelike vector field $v \in E,$ then $S=T.$
\end{theorem}
We merely need to observe that $E$ must be non-degenerate. Indeed, if $v \in E$ with $v \neq 0,$ then there is some particular $m \in K$ with $v(m) \neq 0,$ and then we can choose any $\lambda \in (T_mM)^*$ with $\lambda(v(m)) \neq 0,$ to obtain an element $f \in E^*$ with $f(v) \neq 0,$ namely $f=\lambda [ev_m],$ where $ev_m :E \lra T_mM$ is the evaluation map, $ev_m(w)=w(m),$ for all $w \in E.$

\section{\bf APPLICATIONS OF THE OBSERVER PRINCIPLE}

As an application of Theorem \ref{timelikeobserverprinciple}, we will apply it to integrals of operators which are more general than tensor fields.  Let us call $S$ a {\it tensor operator} of rank $r$ on $K \subset M$ provided that it is a multilinear map of rank $r$ on the vector space of smooth vector fields on $K$ and whose values are also smooth scalar fields on $K$ which has the property that if $m \in K,$ and if $f$ is a smooth function on $K$ which is constant in an open neighborhood of $m,$ then

$$S(v_1,v_2,\cdot \cdot \cdot,fv_l,\cdot \cdot \cdot,v_r)(m)=f(m)S(v_1,v_2,\cdot \cdot \cdot,v_r)(m).$$
 For instance, $S$ could be simply a tensor field of rank $r,$ but more generally, $S$ could be formed by covariant differentiation operators and tensor fields so as to be multilinear of rank $r,$ but not necessarily a tensor field of rank $r.$  Suppose that $\mu$ is a volume form on $M,$ and that $K \subset M.$   We can then define the scalar valued rank $r$ multilinear map $\hat{S}$ on the vector space $\Gamma_K$ of all smooth vector fields on $K$ by

\begin{equation}\label{integralS}
\hat{S}(w_1,w_2,\cdot \cdot \cdot,w_r)=\int_K S(w_1,w_2,\cdot \cdot \cdot,w_r)\mu,  \mbox{ for any }~~~~w_1,w_2,\cdot \cdot \cdot,w_r \in \Gamma_K.
\end{equation}
We will call $\hat{S}$ the integral of $S$ over $K,$ and denote it by

\begin{equation}\label{integraltensorfield}
\int_K S \mu =\hat{S},
\end{equation}
so we have

\begin{equation}\label{integraltensorfield2}
\left[\int_K S \mu \right](w_1,w_2, \cdot \cdot \cdot,w_r)=\int_K S(w_1,w_2,\cdot \cdot \cdot,w_r) \mu, \mbox{ for any }~~w_1,w_2, \cdot \cdot \cdot ,w_r \in \Gamma_K.
\end{equation}

Notice that the integral of such an operator is a multilinear map, which is therefore a special kind of tensor-not a tensor field.  By forming the multilinear map $\int S \mu,$ we in effect get around the problem that generally it does not make sense to integrate a tensor field itself, without choosing some kind of coordinate representation, as integrating components of a tensor does not give a tensor in the usual sense that physicists use the term.

It is clear that if $\int_U S \mu$ vanishes for every sufficiently small open subset of $U$ of $K,$ then $S$ itself must vanish.  For if $S(w_1,w_2,...,w_r)$ does not vanish at $m \in K,$ then it maintains its sign over some open neighborhood $U$ of $m,$ so the integral over $U$ would be either positive or negative but not zero.  To make the term sufficiently small precise here, we could say that if $\U$ is an open cover of $K$ and if $U$ is contained is some member of $\U,$ then $U$ is $\U-$small.  Then we say $U$ is sufficiently small if it is $\U-$small for some open cover $\U$ of $K.$

In any case, we see immediately that if $S$ is symmetric, then so is $\int S \mu.$  Thus if $S$ and $T$ are both symmetric, then to know that $S=T,$ by  Theorem \ref{timelikeobserverprinciple}, it is sufficient to know that the monomial forms of $\int S \mu$ and $\int T \mu$ agree on all sufficiently small open subsets of $K.$

Moreover, when $K$ is compact, if $S$ and $T$ are both symmetric rank $r$ tensor fields on $K,$ then to see that $\hat{S}=\hat{T},$ by Theorem \ref{timelikeobserverprinciple}, it suffices to know that $\hat{S}v^r=\hat{T}v^r$ for every timelike smooth vector field on $K.$  In particular, if $\hat{S}v^r=0$ for every timelike vector field $v$ on $K,$ then $\hat{S}=0.$  But, if $\hat{S}=0,$ then it follows that $S=0,$ since now we are assuming $S$ is an actual tensor field.  For if $S \neq 0,$  then we can choose a  function $f:K \lra [0,1] \subset \bR$ which vanishes outside a small open neighborhood $W_f$ of $m$ but with $f(m)=1.$  We can then choose smooth vector fields $w_1,w_2,...,w_r$ on $K$ with $S(m)(w_1(m),w_2(m),...,w_r(m)) \neq 0,$ so the function $S(w_1,w_2,...,w_r)$ does not vanish at $m.$  Then with $h=f^r:K \lra [0,1],$ we have $h(m)=1,$ and $h$ vanishes on $K \setminus W_f.$  Moreover,
\begin{equation}\label{bumpfunctionintegral}
0=\hat{S}(fw_1,fw_2,   \cdot \cdot \cdot   ,fw_r)=\int_K hS(w_1,w_2,  \cdot \cdot \cdot  ,w_r) \mu.
\end{equation}
If $S(w_1,w_2,  \cdot \cdot \cdot  ,w_r)$ has a positive value at $m,$ then after making $W_f$ smaller if necessary, we can assume $S(w_1,w_2,  \cdot \cdot \cdot   ,w_r)$ must have a positive value for all points of $W_f$ and this would mean the integral on the right hand side of equation (\ref{bumpfunctionintegral}) would not vanish which would be a contradiction.  If $S(w_1,w_2, \cdot \cdot \cdot  ,w_r)$ is negative at $m,$ then replacing $S$ by $-S$ we again reach the conclusion $S=0.$  We have therefore proven the following corollary.
\begin{corollary}\label{integraltensorfieldobserverprinciple}
Suppose that $K$ is a compact subset of $M$ and $S$ is a continuous symmetric tensor field on $K$ of rank $r.$  If the monomial form of the multlinear map $\int_K S \mu$ vanishes on all smooth timelike vector fields, then $S=0.$
\end{corollary}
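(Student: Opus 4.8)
The plan is to combine Theorem \ref{timelikeobserverprinciple} with a standard bump-function localization argument. First I would observe that $\int_K S\mu$ is a symmetric rank $r$ multilinear map on the vector space $\Gamma_K$ of smooth vector fields on $K$, since $S$ is symmetric and integration preserves symmetry and multilinearity. Applying Theorem \ref{timelikeobserverprinciple} with $T=0$, the hypothesis that the monomial form of $\int_K S\mu$ vanishes on all timelike vector fields yields $\int_K S\mu = 0$ as a multilinear map; that is, $\int_K S(w_1,\dots,w_r)\mu = 0$ for \emph{all} smooth vector fields $w_1,\dots,w_r$ on $K$, not merely timelike ones. So the real content is to pass from the vanishing of this integral on all inputs to the pointwise vanishing of the tensor field $S$ itself.

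Next I would argue by contradiction: suppose $S(m)\neq 0$ for some $m\in K$. Then there exist tangent vectors at $m$ on which $S(m)$ is nonzero, and since $S$ is continuous we may extend these to smooth vector fields $w_1,\dots,w_r$ on $K$ with $S(w_1,\dots,w_r)(m)\neq 0$; WLOG (replacing $S$ by $-S$ if needed) the value is positive. Choose a smooth bump function $f:K\to[0,1]$ with $f(m)=1$ supported in a small open neighborhood $W_f$ of $m$, small enough that $S(w_1,\dots,w_r)>0$ throughout $W_f$ (possible by continuity). Setting $h=f^r$, multilinearity gives $0 = \left[\int_K S\mu\right](fw_1,\dots,fw_r) = \int_K h\,S(w_1,\dots,w_r)\,\mu$, exactly as in equation (\ref{bumpfunctionintegral}). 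But the integrand $h\,S(w_1,\dots,w_r)$ is nonnegative everywhere, strictly positive near $m$, and $\mu$ is a volume form, so the integral is strictly positive — a contradiction. Hence $S=0$.

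I should note one point of care regarding the use of the tensor-operator property: the substitution $S(fw_1,\dots,fw_r)=h\,S(w_1,\dots,w_r)$ uses that $S$ is an honest tensor field (so $f$ pulls out of each slot as $f(m)$ pointwise, giving $f^r=h$ as the total factor), which is part of the hypothesis here; this is the one place where "tensor field" is genuinely needed rather than merely "tensor operator." Also worth mentioning is that $\Gamma_K$ is non-degenerate as a vector space, so Theorem \ref{timelikeobserverprinciple} applies — but this is already handled in the paragraph following that theorem and need not be reproved.

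The main obstacle is essentially bookkeeping rather than conceptual: ensuring the vector fields $w_1,\dots,w_r$ and the bump function $f$ can be chosen with all the required properties simultaneously (smoothness, the nonvanishing of $S(w_1,\dots,w_r)$ at $m$ and its sign being constant on $W_f$, and $f$ compactly supported in $W_f$ with $f(m)=1$). On a smooth manifold these are routine — smooth partitions of unity and the continuity of $S$ supply everything — so the proof is short once Theorem \ref{timelikeobserverprinciple} is invoked to kill the integral on all inputs. Indeed the argument is already carried out in full in the paragraph preceding the corollary, so the proof may simply cite that discussion.
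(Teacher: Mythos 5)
Your proof is correct and takes essentially the same route as the paper: the paper likewise invokes Theorem \ref{timelikeobserverprinciple} (with $T=0$) to conclude that $\int_K S\,\mu$ vanishes on all inputs, and then runs the identical bump-function contradiction culminating in equation (\ref{bumpfunctionintegral}), shrinking $W_f$ so the sign is constant and replacing $S$ by $-S$ in the negative case. Your observation that the honest tensor-field property (rather than the weaker tensor-operator property) is what allows $f$ to pull out of each slot to give the factor $h=f^r$ matches the paper's use of that same fact.
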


For instance, if $T$ is an energy momentum stress tensor field on the compact subset $K$ of $M$ and if $\int_K T(v,v) \mu=0$ for every continuous timelike vector field $v,$ then $T=0.$  Of course, the corollary can be proven using the above technique together with the special case of Theorem \ref{timelikeobserverprinciple} in finite dimensions, but our development makes it clear that there are more general applications possible where the purely finite dimensional argument would not suffice.  For instance, if $\lambda$ is a smooth 1-form and $S(u,v)=\lambda(\nabla_u v),$ then $S$ is a tensor operator which is not a tensor field.  If $\int S \mu$ vanishes in this case for all sufficiently small open sets, then this would mean that $\lambda(\nabla_u v)=0$ for all pairs of vector fields $u,v.$  If the monomial form of $\int S \mu$ vanishes for all sufficiently small open sets, then $Sym(S)=0,$ and therefore $\lambda(\nabla_u v)+\lambda(\nabla_v u)=0,$ for all pairs of vector fields $u,v.$

Let us define a {\bf local observer field} to be a smooth timelike unit vector field defined on an open subset of spacetime, $M.$  We say that $u$ is a local observer field at the point $p$ of $M$ if $p$ belongs to the  domain of $u.$  If $S$ is a continuous tensor field on $M$ of rank $r,$ and if $u$ is a local observer field with domain $D= $domain$(u)$ then define

$$\int S \mu [u^{(r)}]=\int_D Su^{(r)} \mu,~D=\mbox{domain}(u).$$

\begin{proposition}\label{better}
Suppose $A$ is a subset of $M$ and that for each local observer field $u$ at a point of $A$ we have $\int S \mu [u^{(r)}]=0.$  Then $S$ vanishes on $A.$
\end{proposition}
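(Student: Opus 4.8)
The plan is to argue by contradiction: assuming $S$ does not vanish on $A$, I will manufacture a single local observer field $u$ at a point of $A$ for which $\int S\mu[u^{(r)}]\neq 0$, contradicting the hypothesis. (As throughout this section I take $S$ to be symmetric at each point, so that the pointwise observer principle applies; without symmetry one only gets that the symmetrization of $S$ vanishes on $A$.)

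So suppose $S(p)\neq 0$ for some $p\in A$, where $S(p)$ denotes the value of $S$ at $p$, a symmetric rank $r$ tensor on $T_pM$. First I would invoke Corollary \ref{obsv2} at $m=p$ with $B=0$: since $S(p)\neq 0$, there must be a timelike unit vector $v\in T_pM$ with $S(p)v^{(r)}\neq 0$, and replacing $S$ by $-S$ if necessary — which does not disturb the hypothesis — I may assume $S(p)v^{(r)}>0$. Next, working in a coordinate chart about $p$, take the vector field $X$ that is constant (in those coordinates) equal to $v$; by continuity of $g$ it is timelike on a neighborhood of $p$, and normalizing (replacing $X$ by $X/\sqrt{-g(X,X)}$) yields a smooth timelike unit vector field $u$ there. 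Since $Su^{(r)}$ is a continuous scalar field with $S(p)u(p)^{(r)}=S(p)v^{(r)}>0$, I can shrink the domain of $u$ to an open neighborhood $D$ of $p$ with compact closure on which $Su^{(r)}>0$ everywhere. The restricted $u$ is still a smooth timelike unit vector field whose domain $D$ contains $p\in A$, hence a local observer field at a point of $A$.

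To finish, note that $\mu$ is a nowhere-vanishing volume form, so the non-empty open set $D$ has positive $\mu$-measure; as $Su^{(r)}$ is continuous, strictly positive, and bounded on $D$, we get $\int S\mu[u^{(r)}]=\int_D Su^{(r)}\mu>0$, contradicting the assumed vanishing. Therefore $S(p)=0$ for every $p\in A$, i.e.\ $S$ vanishes on $A$.

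The one genuinely delicate point I expect is guaranteeing that the integral does not cancel to zero: if the domain of the observer field were large, $Su^{(r)}$ could change sign and the integral vanish. Unlike the bump-function argument behind Corollary \ref{integraltensorfieldobserverprinciple}, one cannot multiply a unit vector field by a cutoff and keep it a unit field, so the remedy is instead to shrink the domain of the observer field itself until the integrand has constant sign. The remaining ingredients — extending and normalizing $v$ near $p$, and the positivity of $\mu$ on non-empty open sets — are routine.
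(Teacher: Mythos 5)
Your proof is correct and follows essentially the same route as the paper's: reduce to a pointwise statement via the observer principle, then shrink the domain of a local observer field until $Su^{(r)}$ has constant sign so the integral cannot vanish. The only difference is that you spell out the construction of the observer field extending a given timelike unit vector (constant in coordinates, then normalized), a step the paper leaves implicit.
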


\begin{proof}
If $p_0$ is a point of $A,$ and  $u_0$ is a timelike unit vector at $p_0,$ it suffices to show that $Su^{(r)}=0,$ by the observer principle.  If not, and if $D$ is the domain of $u,$ then after replacing $D$ by a possibly smaller open subset $V$ with $p_0$ in $V,$ we can assume that $Su^{(r)}$ is never zero on $V,$ which means it must have constant sign, always positive or always negative, on $V.$  Letting $v$ denote the observer field obtained by restricting $u$ to $V,$ we then have $\int S \mu [v^{(r)}] \neq 0,$ a contradiction.

\end{proof}

\section{\bf SPACETIME ENERGY MOMENTUM}

In \cite{C&D2}, the concept of {\bf Spacetime Energy Momentum}  is defined as
$$E=  \frac{1}{4 \pi G}\int \mbox{\bf Ricci } \mu,$$
where $\mu$ is the spacetime metric volume form, and of course {\bf Ricci} is the Ricci curvature tensor of spacetime due to the metric.  By Corollary \ref{integraltensorfieldobserverprinciple}, we see that if the spacetime energy momentum monomial form vanishes on all smooth timelike vector fields on a compact subset of spacetime, then the Ricci tensor itself must vanish everywhere on that compact subset.  As well, by \ref{better}, we see that if the spacetime energy momentum vanishes on every local observer field at points of $A \subset M,$ then by \ref{better}, the Ricci tensor must vanish at all points of $A.$  Alternately, we can think of this as saying that if the Ricci tensor fails to vanish, then some local field of observers must see some spacetime energy.

We can note here, that many definitions of quasi-local energy and momentum also involve integrals of the form discussed here, so that similar remarks would apply to those integrals.  Some of these quasi-local energy momentum definitions have the property that their vanishing implies spacetime is flat.  This is a violation of the idea of the "no prior geometry" concept of general relativity \cite{MTW}.  To say that the energy momentum inside a certain region is zero should imply no more than Riccci flatness, according to the Einstein equation itself.  The full solution for the spacetime with the metric depends on the Einstein equation as well as the boundary conditions, and the boundary conditions should not be considered to contain energy.

%%%%%%%%%%%%%%%%%%%%%%%%%%%%%%%%%%%%%%%%%%%%%%%%%%%%%%%%%%%%%%%%%%%%%%%%%%%%%%%%%%%%%%%%%%%%%%%%%%%%%%%%%%
%%%%%%%%%%%%%%%%%%%%%%%%%%%%%%%%%%%%%%%%%%%%%%%%%%%%%%%%%%%%%%%%%%%%%%%%%%%%%%%%%%%%%%%%%%%%%%%%%%%%%%%%%%%
%%%%%%%%%%%%%%%%%%%%%%%%%%%%%%%%%%%%%%%%%%%%%%%%%%%%%%%%%%%%%%%%%%%%%%%%%%%%%%%%%%%%%%%%%%%%%%%%%%%%%%%%%%

\end{document}